\DeclareMathOperator{\Reg}{\mathsf{Reg}}
\begin{document}

 \setcounter{page}{299}
 \publyear{24}
 \papernumber{2184}
  \volume{191}
  \issue{3-4}

\finalVersionForARXIV


\title{Strong Regulatory Graphs}

\author{Patric Gustafsson\thanks{This work was initiated as part of Patric Gustafsson's master thesis project at \AA
                   bo Akademi University (\AA A) in 2019. Patric is no longer affiliated with \AA A.}
 \\
Department of Information Technologies\\
 \AA bo Akademi University, Finland\\
patricjgustafsson08@gmail.com
\and Ion Petre\thanks{Address for correspondence: Department of Mathematics and Statistics, University of Turku, Finland}
\\
Department of Mathematics and Statistics\\
University of Turku, Finland\\
National Institute of Research and \\
Development for Biological Sciences, Romania\\
ion.petre@utu.fi
}

\maketitle

\runninghead{P. Gustafsson and I. Petre}{Strong Regulatory Graphs}

\begin{abstract}
Logical modeling is a powerful tool in biology, offering a system-level understanding of the complex interactions that govern biological processes. A gap that hinders the scalability of logical models is the need to specify the update function of every vertex in the network depending on the status of its predecessors. To address this, we introduce in this paper the concept of strong regulation, where a vertex is only updated to active/inactive if all its predecessors agree in their influences; otherwise, it is set to ambiguous. We explore the interplay between active, inactive, and ambiguous influences in a network. We discuss the existence of phenotype attractors in such networks, where the status of some of the variables is fixed to active/inactive, while the others can have an arbitrary status, including ambiguous.
\end{abstract}

\begin{keywords}
Biomodeling, interaction networks, regulatory graphs, Boolean networks, phenotype attractors.
\end{keywords}

\section{Introduction}
\label{sec:intro}

Logical modeling is a well established mathematical approach to biology, offering a simple, intuitive way to understand complex systems. Biological networks can be represented as graphs and experimental data integrated in them to yield a system-level view of key regulators, dynamical patterns, and response to various changes. There are many successful applications of logical modeling such as the identification of drug targets and treatments \cite{Bloomingdale2018}, modeling of regulatory networks \cite{CACACE2020205}, of signaling pathways \cite{10.7554/eLife.72626}, of cell growth and apoptosis \cite{10.1371/journal.pcbi.1006402}, applications in immunology \cite{10.3389/fphys.2020.590479}, and many others. A logical model is discrete (often Boolean), with its dynamics defined as the result of the interplay of the influences among its variables. Such influences are modeled as directed edges between the variables, sometimes with the annotation of the activation/inhibition nature of the interaction, see \cite{Wang_2012}  for a review on Boolean modeling in systems biology.

Constructing large-scale models has been made possible in recent years by the availability of data respositories such as KEGG \cite{10.1093/nar/gkv1070}, OmniPath \cite{article-omnipath}, InnateDB \cite{article-innate}, SIGNOR \cite{article-signor} and DrugBank \cite{article-drugbank}. Using such resources has made it possible to build models consisting of thousands of variables and interactions \cite{10.1093/bib/bbab490, NetControlGenAlg}. Integrating such large scale data into logical or Boolean models is problematic because of the need to identify the update functions for each variable of the model. These  functions specify the precise conditions under which a variable changes its status, e.g., between active and inactive, depending on the status of its predecessors/regulators in the network. The difficulty is that the data is almost never enough to identify the precise nature of these  functions.
Instead, the modeler often postulates the type of update functions in the network (propositional formulas (\cite{GTZANUDO20181}), threshold conditions (\cite{zanudo2011boolean}), multi-valued functions (\cite{naldi_decision_2007})) The specifics of how to choose each update function are based on previous literature and to some extent, to the choice of the modeler, subject to experimentation with various options.

The approach we take in this paper is motivated by the concept of controlling a network: choose a user intervention in the network (e.g., through fixing some variables to some constants), so that the model converges to a certain desired attractor. Network controllability has found interesting applications in biology, e.g., in drug repurposing \cite{10.1093/bib/bbab490, NetControlGenAlg}. However, tracking the effect of the user interventions throughout the network is difficult when the variables of the model are under conflicting influences, with some regulators pushing for their activation, while others for their inhibition. The computational complexity of this problem has been investigated in \cite{StructuralTargetControl2018, ControlRSystems2020}.
We propose in this paper a notion of \emph{strong regulatory graphs} where the update of a node's status is  determined to be active/inactive \emph{only if its predecessors concur in their influences}. Otherwise it is set to be \emph{ambiguous}, meaning that it could be both active and inactive, depending on the precise (numerical) setup of its predecessors and of their influences, which is in practice very difficult to determine. This leads to an intricate interplay between ambiguous, negative and positive influences in our framework.
Interesting questions about the spread of ambiguity in the network can be asked, e.g., in terms of the existence of phenotype attractors. We introduce in this paper a simple mathematical formalization of the concept of \emph{strong regulation}. We discuss the phenotype attractor problem for strong regulatory graphs: whether attractors exist in which some of the variables are constant (they follow a given active/inactive phenotype), while the others can be arbitrary, even ambiguous.

\section{Strong regulatory graphs (SRG)}
The model proposed here builds on the regulatory graphs structure \cite{thomas1991regulatory, thomas1995dynamical, GIN-sim, naldi_decision_2007} and adds to them the concept of strong regulation.
Choosing how to update the status of a node is obvious if its active regulators concur in their influences: if they all exert an activation regulation on the node, then it gets \emph{activated} (denoted as $1$), and if all exert an inhibition regulation on it, then it gets \emph{inhibited} (denoted as $-1$). A node under conflicting regulation, with some active regulators trying to activate it and others to inhibit it, is less clear how to update. The typical approach is to postulate an update function, Boolean or multi-valued, that sets its status to active or inactive, based on a specific schema of the status of its predecessors. The main difficulty of these approaches is that the experimental data and the level of detail in the model are rarely sufficient to specify such detailed update functions. Our proposal instead is that under such conflicting the status of the node should simply be recognized as being \emph{ambiguous} (denoted as $0$). This means that the node may potentially be active or inactive, but the level of detail in the model is not enough to specify it with confidence either way. The ambiguity of a node may well cascade down through the system, in that the regulation it exerts on its successor in the regulatory graph may be active or not. Still, an ambiguous regulation may well be resolved by another \emph{active} regulation on the same node.
This opens the possibility to reason about the dynamic interplay between activation, inhibition, and ambiguity, where the semantics of activation and inhibition is in its \emph{strongest} possible form to mean that the node has been \emph{unanimously} activated/inhibited. We define the structure of the strong regulatory graphs in Definition~\ref{def_SRG} and their dynamics in Definition~\ref{d_SRG}.

For a graph $G=(V,E)$, we consider the set of vertices to be ordered. In this way, when we write a vector $x\in\{-1,0,1\}^{|V|}$, $x_v$ will denote the component of $x$ corresponding to $v\in V$, without risk of ambiguity.

\begin{definition}
\label{def_SRG}
    A strong regulatory graph (SRG) is an edge-labeled graph \(G=(V, E)\) where $V=\{v_1,v_2,\ldots,v_n\}$ is the finite set of \emph{vertices} and $E\subseteq V\times V$ is the set of directed edges. The labeling of the edges is done through the partition $\{E_+,E_-\}$ of $E$. We say that $(u,v)\in E_+$ is an \emph{activating edge} and that $(u,v)\in E_-$ is an \emph{inhibiting edge}.

    A state of the SRG is $x\in\{-1,0,1\}^{|V|}$ whose intended meaning is that $x_v=-1$ if $v$ is inactive in state $x$, $x_v=1$ if $v$ is active in state $x$, and $x_v=0$ if $v$ is ambiguous in state $x$.
    For an edge $(u,v)\in E_+$, we say that $u$ is an \emph{activator} of $v$ in state $x$ if $x_u=1$. We say that it is a \emph{potential activator} of $v$ in $x$ if $x_u=0$.
    Similarly, for an edge $(u,v)\in E_-$, we say that $u$ is an \emph{inhibitor} of $v$ in state $x$ if $x_u=1$. We say that it is a \emph{potential inhibitor} of $v$ in $x$ if $x_u=0$.
\end{definition}

The dynamics of a strong regulatory graph $G=(V,E)$ is defined through a state-transition system. The state is given by the $-1$/$0$/$1$ activation status of all nodes in the graph. The set of states is thus $\{-1,0,1\}^{|V|}$. Given a state, a transition will indicate the change in the activation status of the nodes in the graph.
The states can be updated in a synchronous or in an asynchronous way.
For simplicity, we only discuss in this paper the synchronous activation update and the dynamics of strong regulatory graph in terms of deterministic state-transition systems. The asynchronous update strategy leads to a similar discussion.

In defining the update rule for the status of a node, we follow our proposal for the concept of \emph{strong regulation}, where a node is activated/inhibited if all its regulators concur in their influences on the node. In case of conflicting influences, the status of the node is set to `ambiguous'. The influence of a regulator whose status is ambiguous is quite subtle, depending on the status of the other regulators and of the node itself. We discuss the intuition of the update rule below and then give it a formal definition.

\medskip
If there are no potentially active inhibitors and no potentially active activators, then the vertex should preserve its activation status. A node $v$ should be set to `active' in two situations:
\begin{itemize}
\item $v$ was active and none of its inhibitors were potentially active, i.e., active or ambiguous (in this case, the node remains active, absent any potential or active inhibition regulation), or
\item at least one activator of $v$ was active and none of its inhibitors were potentially active (in this case, the regulation on the node is un-ambiguous towards activation).
\end{itemize}
Similarly, a node $v$ should be set to `inactive' after an application of the update rule in two situations:
\begin{itemize}
\item $v$ was inactive and none of its activators were potentially active, i.e., active or ambiguous, or
\item at least one inhibitor of $v$ was active and none of its activators were potentially active.
\end{itemize}
If the node is under conflicting influences, its status should be set to `ambiguous'. Here are four cases when this should happen:
\begin{itemize}
\item at least one activator and at least one inhibitor of $v$ were potentially active; this reflects a node under potentially conflicting regulation;
\item $v$ was active and one of its inhibitors was ambiguous (regardless of the status of its activators); this reflects the ambiguity of whether there is an inhibitory influence on $v$ or not, which may either leave the node active, or switch to inactive;
\item $v$ was inactive and one of its activators was ambiguous (regardless of the status of its inhibitors); this reflects the ambiguity of whether there is an activating influence on $v$ or not, which may either leave the node inactive, or switch it to active;
\item $v$ was ambiguous and none of its regulators is active (i.e., they are either inactive or ambiguous); this reflects the situation where the status of $v$ cannot be clarified because the status of its regulators is either inactive or ambiguous.
\end{itemize}

For a vertex $v\in V$ and a state $x\in\{-1,0,1\}^{|V|}$, we define the set $\Reg_+(x, v)$ of activators of $v$ in state $x$ and its set $\Reg_-(x, v)$ of inhibitors as follows:
\begin{align*}
    \Reg_+(x, v)&=\{x_u \mid (u,v)\in E_+, x_u\in\{0,1\}\}; \\
    \Reg_-(x, v)&=\{x_u\mid (u,v)\in E_-, x_u\in\{0,1\}\}.
\end{align*}
The set $\Reg_+(x, v)$ contains $1$ if $v$ has at least one active activator in state $x$ and it contains $0$ if $v$ has at least one ambiguous activator in state $x$.
Similarly, the set $\Reg_-(x, v)$ contains $1$ if $v$ has at least one active inhibitor in state $x$ and it contains $0$ if $v$ has at least one ambiguous inhibitor in state $x$.

We also define the reflexive extension of $\Reg$ to include the status of the vertex itself. The reflexive extension captures a sort of ``inertia'' of the dynamics (defined below) where a node may preserve its status in the absence of active regulators.
\begin{align*}
    \overline{\Reg}_+(x, v)&=\Reg_+(x, v)\cup \{x_v\mid x_v\in\{0,1\}\}; \\
    \overline{\Reg}_-(x, v)&=\Reg_-(x, v)\cup \{-x_v\mid x_v\in\{-1, 0\}\}.
\end{align*}
Note that an inactive node $v$, i.e. $x_v=-1$, contributes $1$ to $\overline{\Reg}_-(x, v)$, consistent with our suggestion of ``inertia'': absent any active regulation on node $v$, it will remain inactive.

\begin{definition}\label{d_SRG}
Let $G=(V,E)$ be a strong regulatory graph. Its dynamics is given by the map $f_G:\{-1,0,1\}^{|V|}\rightarrow \{-1,0,1\}^{|V|}$ defined as
\[
f_G(x)_v=
\begin{cases}
1, \text{ if } 1\in\overline{\Reg}_+(x,v) \text{ and } \Reg_-(x,v)=\emptyset; \\
-1, \text{ if } 1\in\overline{\Reg}_-(x,v) \text{ and } \Reg_+(x,v)=\emptyset; \\
0, \text{ otherwise,}
\end{cases}
\]
for all $v\in V$. The dynamics of a strong regulatory graph can also be seen as a state-transition system with the set of states being $\{-1,0,1\}^{|V|}$ and the transitions defined through the map $f_G$.
\end{definition}

\begin{figure}[htb]
\begin{center}
\begin{tabular}{ccc}
\includegraphics[width=3.72cm]{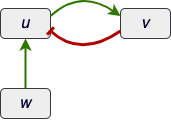}
&
\hspace*{1cm}
&
{
\includegraphics[width=3.72cm]{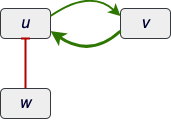}
}
\\
(a) & & (b)
\end{tabular}\vspace*{-5mm}
\end{center}
\caption{Two simple regulatory graphs \cite{naldi_decision_2007}.
The vertices are shown with rectangles, the activation edges with pointed arrows and the inhibition edges with blunt
arrows.}\label{fig-naldi}\vspace*{-4mm}
\end{figure}

\begin{example}\label{ex_two_srg}
We discuss two regulatory graphs introduced in \cite{naldi_decision_2007} and shown in Figure \ref{fig-naldi}. The set of vertices is $V=\{u, v, w\}$,
the activation edges are shown with pointed arrows and the inhibition edges with blunt arrows.

\medskip
For the graph in Figure \ref{fig-naldi}(a), consider the state $x=(-1,1,1)$. Then:
\begin{itemize}
    \item $\Reg_+(x,u)=\{1\}$, since $(w,u)$ is an activation edge and ${x}_w=1$;
    \item $\Reg_-(x,u)=\{1\}$, since $(v,u)$ is an inhibition edge and ${x}_{v}=1$;
    \item $\overline{\Reg}_+(x,u)=\{1\}$ and $\overline{\Reg}_-(x,u)=\{1\}$, since ${x}_{u}=-1$;
    \item $\Reg_+(x,v)=\emptyset$, since the only incoming activation edge into $v$ is $(u,v)$ but ${x}_u=-1$;
    \item $\Reg_-(x,v)=\emptyset$, since there are no incoming inhibition edges into $v$;
    \item $\overline{\Reg}_+(x,v)=\{1\}$ and $\overline{\Reg}_-(x,v)=\emptyset$, since $x_v=1$;
    \item $\Reg_+(x,w)=\Reg_-(x,w)=\emptyset$, since there are no incoming edges into $w$;
    \item $\overline{\Reg}_+(x,$ $w)=\{1\}$ and $\overline{\Reg}_-(x,w)=\emptyset$, since $x_w=1$.
\end{itemize}
This means that $x\rightarrow (0,1,1)$. This corresponds well to our intuition for the meaning of the strong regulatory graphs: $u$ is under conflicting regulation and it becomes set to \emph{ambiguous}, $v$ and $w$ are under no active regulation and they preserve their status. Denote $y=(0,1,1)$. Then:
\begin{itemize}
    \item $\Reg_+(y,u)=\{1\}$, since $(w,u)$ is an activation edge and $y_w=1$;
    \item $\Reg_-(y,u)=\{1\}$, since $(v,u)$ is an inhibition edge and $y_v=1$;
    \item $\overline{\Reg}_+(y,u)=\{0,1\}$ and $\overline{\Reg}_-(y,u)=\{0,1\}$, since $y_u=0$;
    \item $\Reg_+(y,v)=\{0\}$, since $(u,v)$ is an activation edge and $y_u=0$;
    \item $\Reg_-(y,v)=\emptyset$, since there are no incoming inhibition edges into $v$;
    \item $\overline{\Reg}_+(y,v)=\{0,1\}$ and $\overline{\Reg}_-(y,v)=\emptyset$, since $y_v=1$;
    \item $\Reg_+(y,w)=\Reg_-(y,w)=\emptyset$, since there are no incoming edges into $w$;
    \item $\overline{\Reg}_+(y,$ $w)=\{1\}$ and $\overline{\Reg}_-(y,w)=\emptyset$, since $y_w=1$.
\end{itemize}
Consequently, $y\rightarrow y$. The intuition for the update on $u$ and $w$ remains the same. For $v$, there is a potential activation regulation coming from the ambiguous status of its activator $u$, but since $v$ is active in $y$, it will remain so in the next state.

\medskip
For the graph in Figure \ref{fig-naldi}(b), the following are valid state transitions:
\begin{itemize}
    \item $(-1,1,-1)\rightarrow(1,1,-1)\rightarrow(1,1,-1)$,
    \item $(1,-1,-1)\rightarrow(1,1,-1)$,
    \item $(-1,-1,1)\rightarrow(-1,-1,1)$ and
    \item $(1,-1,1)\rightarrow(-1,1,1)\rightarrow(0,1,1)\rightarrow(0,1,1)$.
\end{itemize}

The state transition graphs for these two examples are in Figure \ref{fig_two_sts}.
\end{example}

\begin{figure}[!ht]
\vspace*{-5mm}
\centering
\begin{tabular}{c}
\adjustbox{
clip=true, trim=1cm 1cm 1cm 1cm 
}
{\hspace*{-8mm}
\includegraphics[width=1.18\textwidth]{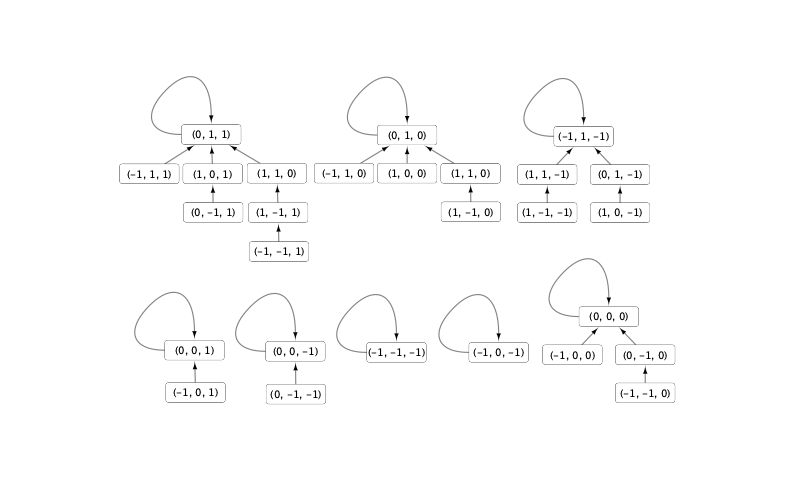}
}
\vspace*{-8mm}\\
(a)
\\[-15pt]
\adjustbox{
clip=true, trim=1.5cm 2.5cm 1.5cm 1.5cm 
}
{\hspace*{-20mm}
\includegraphics[width=1.36\textwidth]{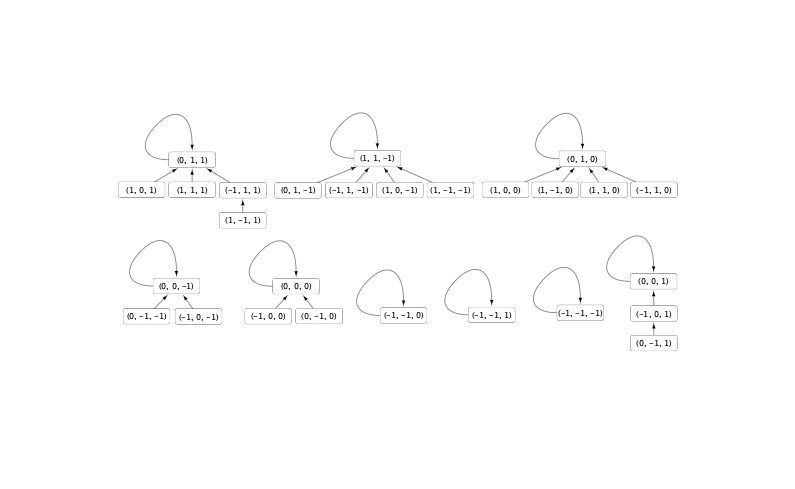}
}
\vspace*{-10mm}\\
(b)
\end{tabular}
\caption{The state transition graphs of the regulatory graphs in Example \ref{ex_two_srg} and Figure \ref{fig-naldi}. }
\label{fig_two_sts}\vspace*{-3mm}
\end{figure}

The following observation shows that a vertex preserves its activation status, in the absence of potentially active predecessors.

\begin{lemma}\label{l_nopred}
Let $G=(V,E)$ be a strong regulatory graph and $v\in V$. For any state $x$ with $\Reg_+(x,v)=\Reg_-(x,v)=\emptyset$, we have $f_G(x)_v=x_v$.
\end{lemma}

\begin{proof}
If $x_v=-1$, then $\overline\Reg_-(x,v)=\{1\}$ and so, $f_G(x)_v=-1$.
If $x_v=1$, then $\overline\Reg_+(x,v)=\{1\}$ and so, $f_G(x)_v=1$.
If $x_v=0$, then $\overline\Reg_-(x,v)=\overline\Reg_+(x,v)=\{0\}$ and so, $f_G(x)_v=0$.
\end{proof}

We clarify now the conditions under which a vertex $v$ is updated to an `ambiguous' state. Following Lemma \ref{l_nopred}, we only focus on the case when $v$ is under some (potential) regulators, i.e., when either $\Reg_+(x,v)\ne\emptyset$, or $\Reg_-(x,v)\ne\emptyset$, as otherwise $v$ maintains its status unchanged. The theorem shows that our definition covers exactly the intuition offered just before Definition \ref{def_SRG}.

\begin{theorem}\label{t_SRG}
Let $G=(V,E)$ be an SRG and $v\in V$ in state $x$, with $\Reg_+(x,v)\cup\Reg_-(x,v)\ne\emptyset$. We have $f_G(x)_v=0$ if and only if
\begin{enumerate}[(i)]
    \item $\Reg_+(x,v)\ne\emptyset$ and $\Reg_-(x,v)\ne\emptyset$, or
    \item $\overline{\Reg}_+(x,v)=\{0\}$ and $\Reg_-(x,v)=\emptyset$, or
    \item $\overline{\Reg}_-(x,v)=\{0\}$ and $\Reg_+(x,v)=\emptyset$.
\end{enumerate}
\end{theorem}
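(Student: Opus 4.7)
The plan is a straightforward unfolding of Definition \ref{d_SRG} combined with a case split on whether the sets $\Reg_+(\alpha,v)$ and $\Reg_-(\alpha,v)$ are empty. The first step is to observe that, by the definition, $\beta(v)=0$ is equivalent to the failure of both of the following conditions:
\begin{enumerate}[(A)]
\item $1\in\overline{\Reg}_+(\alpha,v)$ and $\Reg_-(\alpha,v)=\emptyset$ (which would force $\beta(v)=1$);
\item $1\in\overline{\Reg}_-(\alpha,v)$ and $\Reg_+(\alpha,v)=\emptyset$ (which would force $\beta(v)=-1$).
\end{enumerate}
All reasoning is carried out against this reformulation, together with the explicit formulae defining $\overline{\Reg}_+$ and $\overline{\Reg}_-$.

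For the backward implication I would handle the three conditions one at a time. Condition (i) kills (A) because $\Reg_-(\alpha,v)\ne\emptyset$, and kills (B) because $\Reg_+(\alpha,v)\ne\emptyset$, so $\beta(v)=0$. Condition (ii) kills (A) immediately because $1\notin\overline{\Reg}_+(\alpha,v)=\{0\}$; to kill (B), I would use that $\overline{\Reg}_+(\alpha,v)=\{0\}$ forces $\alpha(v)\ne 1$, and then split on $\alpha(v)\in\{0,-1\}$. If $\alpha(v)=0$, the reflexive term $\{-\alpha(v)\}=\{0\}$ combined with $\Reg_-(\alpha,v)=\emptyset$ gives $\overline{\Reg}_-(\alpha,v)=\{0\}$, so $1\notin\overline{\Reg}_-(\alpha,v)$. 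If $\alpha(v)=-1$, the reflexive term contributes $1$ to $\overline{\Reg}_-(\alpha,v)$, so (A) is not enough to rule out (B); however, because $\overline{\Reg}_+(\alpha,v)=\{0\}$ and $\alpha(v)=-1$ contributes nothing to $\overline{\Reg}_+$, we must have $\Reg_+(\alpha,v)=\{0\}\ne\emptyset$, which kills (B). Condition (iii) is handled by the symmetric argument.

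For the forward implication, assume $\beta(v)=0$ and use the hypothesis $\Reg_+(\alpha,v)\cup\Reg_-(\alpha,v)\ne\emptyset$ to exclude the trivial case. If both $\Reg_+(\alpha,v)$ and $\Reg_-(\alpha,v)$ are nonempty, condition (i) is immediate. Suppose $\Reg_+(\alpha,v)\ne\emptyset$ and $\Reg_-(\alpha,v)=\emptyset$. Since (A) fails and $\Reg_-(\alpha,v)=\emptyset$, we must have $1\notin\overline{\Reg}_+(\alpha,v)$; this forces every element of $\Reg_+(\alpha,v)$ to be $0$ and also $\alpha(v)\ne 1$. Combined with the defining formula for $\overline{\Reg}_+$, this gives $\overline{\Reg}_+(\alpha,v)=\{0\}$, which together with $\Reg_-(\alpha,v)=\emptyset$ is exactly (ii). The symmetric case with $\Reg_+(\alpha,v)=\emptyset$ and $\Reg_-(\alpha,v)\ne\emptyset$ yields (iii) by the same reasoning applied to $\overline{\Reg}_-$.

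The only subtle point is the $\alpha(v)=-1$ sub-case inside (ii), where the inhibitory reflexive self-contribution re-introduces $1$ into $\overline{\Reg}_-(\alpha,v)$ and one has to remember that the hypothesis $\overline{\Reg}_+(\alpha,v)=\{0\}$ itself guarantees $\Reg_+(\alpha,v)\ne\emptyset$. Everything else is purely a mechanical verification, so the proof reduces to carefully tracking the reflexive contributions of $\alpha(v)$ on both sides.
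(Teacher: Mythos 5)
Your proof is correct and takes essentially the same route as the paper's: both unfold Definition \ref{d_SRG} so that $\beta(v)=0$ means neither of the two update conditions holds, and then track the reflexive contributions of $\alpha(v)$ through a case analysis on which of $\Reg_+(\alpha,v)$, $\Reg_-(\alpha,v)$ is empty. The paper merely organizes this as one chain of equivalences (expanding into four clauses and absorbing the redundant one) instead of two separate implications; the content, including the subtle $\alpha(v)=-1$ subcase you flag, is the same.
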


\begin{proof}

\vspace*{-15mm}
\begin{align}
\notag &f_G(x)_v=0 \Leftrightarrow f_G(x)_v\ne 1 \land f_G(x)_v\ne -1\\
\notag \Leftrightarrow \ & (1\not\in\overline{\Reg}_+(x,v) \lor \Reg_-(x,v)\ne\emptyset) \land
(1\not\in\overline{\Reg}_-(x,v)\lor \Reg_+(x,v)\ne\emptyset)\\
\label{align1}
\Leftrightarrow
\ &(1\not\in\overline{\Reg}_+(x,v) \land 1\not\in\overline{\Reg}_-(x,v)) \\
\label{align2}
\ & \lor (1\not\in\overline{\Reg}_+(x,v) \land \Reg_+(x,v)\ne\emptyset) \\
\label{align3}
\ & \lor(\Reg_-(x,v)\ne\emptyset \land 1\not\in\overline{\Reg}_-(x,v))\\
\label{align4}
\ & \lor(\Reg_-(x,v)\ne\emptyset \land \Reg_+(x,v)\ne\emptyset)
\end{align}

Because of \eqref{align4} and the hypothesis that $\Reg_+(x,v)\cup \Reg_-(x,v)\ne\emptyset$, we can consider in \eqref{align1}, \eqref{align2} and in \eqref{align3} that one of the sets $\Reg_+(x,v)$ and $\Reg_-(x,v)$ is non-empty, while the other is empty.

\medskip
In \eqref{align2}, we observe that $1\not\in\overline{\Reg}_+(x,v)$ is equivalent to $x_v\ne 1 \land 1\not\in{\Reg}_+(x,v)$, which together with $\Reg_+(x,v)\ne\emptyset$ is equivalent with $x_v\ne 1 \land \Reg_+(x,v)=\{0\}$. As we noted above, we can add to the conjunction also the term $\Reg_-(x,v)=\emptyset$ since $\Reg_+(x,v)\ne\emptyset$. This gives us the clause $x_v\ne 1 \land \Reg_+(x,v)=\{0\} \land \Reg_-(x,v)=\emptyset$. Under the hypothesis of the theorem that $\Reg_+(x,v)\cup\Reg_-(x,v)\ne\emptyset$, this clause is equivalent with
\begin{align}
\label{2prime}\tag{2'}\overline{\Reg}_+(x,v)=\{0\}\land \Reg_-(x,v)=\emptyset.
\end{align}

Using a symmetric argument, we can conclude that \eqref{align3} can be replaced in the disjunction with
\begin{align}
\label{3prime}\tag{3'}\overline{\Reg}_-(x,v)=\{0\}\land\Reg_+(x,v)=\emptyset.
\end{align}

Clause \eqref{align1} is equivalent with $x_v=0 \land 1\not\in\Reg_+(x,v) \land 1\not\in\Reg_-(x,v)$. Since $\Reg_+(x,v)$, $\Reg_-(x,v)\subseteq\{0,1\}$ and, as noted above, we can assume that one is empty, while the other is not, it follows that clause \eqref{align1} can be replaced with
\begin{align*}
&\left(x_v=0 \land \Reg_+(x,v)=\{0\} \land \Reg_-(x,v)=\emptyset\right)  \lor \\
&\left(x_v=0 \land \Reg_-(x,v)=\{0\} \land \Reg_+(x,v)=\emptyset\right).
\end{align*}
This is equivalent with
\begin{equation}
\tag{1'}
\begin{tabular}{c}
$\left(x_v=0\land \overline{\Reg}_+(x,v)=\{0\} \land \Reg_-(x,v)=\emptyset\right) \lor$
\\
$\left(x_v=0\land \overline{\Reg}_-(x,v)=\{0\} \land \Reg_+(x,v)=\emptyset\right).$
\end{tabular}
\end{equation}
These two conditions are absorbed within the disjunction under the clauses \eqref{2prime} and \eqref{3prime}. This proves the claim of the theorem.
\end{proof}

The following result is a simple consequence of Definition \ref{d_SRG} and Theorem \ref{t_SRG} and it offers more insight into the dynamics of strong regulatory graphs.

\begin{lemma}
Let $G=(V,E)$ be a strong regulatory graph, $v\in V$ and $x$ a state of $G$. Then

\begin{enumerate}[(i)]
\item $f_G(x)_v=x_v$, if $x_v=1$ and $\Reg_-(x,v)=\emptyset$, or $x_v=-1$ and $\Reg_+(x,v)=\emptyset$, or $x_v=0$ and $1\not\in\Reg_+(x,v)\cup\Reg_-(x,v)$;

\item $f_G(x)_v=1$, if $1\in\Reg_+(x,v)$ and $\Reg_-(x,v)=\emptyset$;

\item $f_G(x)_v=-1$, if $1\in\Reg_-(x,v)$ and $\Reg_+(x,v)=\emptyset$;

\item $f_G(x)_v=0$, if $\Reg_+(x,v)\ne\emptyset$ and $\Reg_-(x,v)\ne\emptyset$.
\end{enumerate}
\end{lemma}

\section{Phenotype attractors}

We are interested in attractors that are defined through a fixed configuration (also called phenotype in \cite{10.1109/TCBB.2018.2879097}) on some (possibly not all) of their variables. In other words, we are interested in minimal cycles of the state transition systems, where some of the variables of the models are constant. This is similar to the concept of a target set in the partial controlability of complex networks \cite{ControlSciRep2017}. The notion of phenotype was introduced for Boolean networks in \cite{10.1109/TCBB.2018.2879097} and discussed in connection with the control of Boolean networks in \cite{10.3389/fams.2022.838546}.

\begin{definition}
Let $G=(V,E)$ be a strong regulatory graph. We define a \emph{trap set} of its state transition graph in the usual way as a set $S$ of vertices such that $f_G(S)\subseteq S$. A trap set is also called an invariant in dynamical system theory. An \emph{attractor} is defined as a (non-empty) minimal trap set under set inclusion.

A \emph{target} $T$ is a set of vertices $T\subseteq V$. A \emph{$T$-phenotype} is a function $\alpha_T:T\rightarrow \{-1,1\}$ that gives an active/inactive status assignment of the vertices in $T$. An \emph{$\alpha_T$-phenotype attractor} ${\cal A}_{\alpha_T}$ is an attractor of the state transition graph that has the phenotype $\alpha_T$ on $T$, i.e., for any $x\in {\cal A}_{\alpha_T}$, $x_t=\alpha_T(t)$, for all $t\in T$.
\end{definition}

The \emph{phenotype problem} is to decide whether for a given phenotype $\alpha_T$, there exists an $\alpha_T$-phenotype attractor. We prove that the problem can be solved in polynomial time. Even more, we give a simple characterization of phenotype attractors.

We say that $u$ is an \emph{inhibition-predecessor} of $v$ if $(u,v)\in E_-$. We say that $u$ is an \emph{inhibition-ancestor} of $v$ if there are vertices $w_1, w_2,\dots, w_n$, $n\geq 1$ with $(w_i,w_{i+1})\in E_-$ for all $1\leq i\leq n-1$, $w_1=u$ and $w_n=v$.

Similarly, we say that $u$ is an \emph{activation-predecessor} of $v$ if $(u,v)\in E_+$.
We say that $u$ is an \emph{activation-ancestor} of $v$ if there are vertices $w_1, w_2,\dots, w_n$, $n\geq 1$ with $(w_i,w_{i+1})\in E_+$ for all $1\leq i\leq n-1$, $w_1=u$ and $w_n=v$.

\begin{theorem}\label{t_attractor}
Let $G(V,E)$ be a strong regulatory graph, let $T\subseteq V$ be a target set and let  $\alpha_T:T\rightarrow\{-1,1\}$ be a phenotype on $T$. There exists an $\alpha_T$-attractor if and only if the following two conditions hold:
\begin{enumerate}[(i)]
    \item\label{target1} for all $v\!\in T$ with $\alpha_T(v)\!=1$ and for all $u\!\in T$ inhibition-predecessors of $v$, we have \mbox{$\alpha_T(u)\!=-1$;}
    \item\label{target2} for all $v\in T$ with $\alpha_T(v)\!=-1$ and for all $u\in T$ activation-ancestors of $v$, we have \mbox{$\alpha_T(u)\!=-1$.}
\end{enumerate}
\end{theorem}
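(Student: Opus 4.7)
The plan is to prove the two directions separately. For the necessity direction (existence implies (a) and (b)), I would pick a generic state of the attractor and analyze the update rule at vertices in $T$. For the sufficiency direction, I would construct an explicit extension of $\alpha_T$ and verify that it is a fixed point, which constitutes a singleton $\alpha_T$-phenotype attractor.

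For necessity, fix an $\alpha_T$-phenotype attractor $\mathcal{A}$ and take any $\alpha\in\mathcal{A}$ together with its successor $\beta\in\mathcal{A}$. To show (a), let $v\in T$ with $\alpha_T(v)=1$ and $u\in T$ an inhibition-predecessor of $v$. Since $\beta(v)=1$, Definition~\ref{d_SRG} requires $\Reg_-(\alpha,v)=\emptyset$, which forces $\alpha(u)=-1$ and thus $\alpha_T(u)=-1$. To show (b), I would prove by induction on the length $n$ of the activation path $u=x_1,x_2,\ldots,x_n=v$ the stronger claim that $\alpha(x_i)=-1$ in every state of $\mathcal{A}$, for every $x_i$ on the path (including vertices not in $T$). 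The base case $n=1$ is $u=v$ with $\alpha_T(v)=-1$. For the inductive step, applying the inductive hypothesis to $x_2$ gives $\alpha(x_2)=-1$ throughout $\mathcal{A}$; the update rule requirement $\beta(x_2)=-1$ then forces $\Reg_+(\alpha,x_2)=\emptyset$, which in turn forces $\alpha(u)=-1$ since $(u,x_2)\in E_+$. Restricting the conclusion to $u\in T$ yields condition (b).

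For sufficiency, I would construct a candidate fixed point $\alpha^*$ extending $\alpha_T$. The natural definition sets $\alpha^*(v)=\alpha_T(v)$ for $v\in T$; $\alpha^*(v)=-1$ for each $v\notin T$ that is an activation-ancestor of some $w\in T$ with $\alpha_T(w)=-1$; and $\alpha^*(v)=0$ otherwise. Condition (b) ensures this assignment is consistent with the phenotype on $T$, since no $v\in T$ with $\alpha_T(v)=1$ can be an activation-ancestor of a vertex in $T$ with $\alpha_T=-1$. The verification that $\alpha^*\to_G\alpha^*$ is a case analysis against Definition~\ref{d_SRG}: for $v\in T$ with $\alpha_T(v)=1$ one needs $\Reg_-(\alpha^*,v)=\emptyset$, which for inhibitors lying in $T$ is exactly condition (a); for $v\in T$ with $\alpha_T(v)=-1$ one needs $\Reg_+(\alpha^*,v)=\emptyset$, which follows from the activation-ancestor closure built into $\alpha^*$; for vertices defaulted to $0$, Lemma~\ref{l_nopred} and Theorem~\ref{t_SRG} ensure the output is again $0$.

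The main obstacle will be verifying the first step of the case analysis above: ensuring $\Reg_-(\alpha^*,v)=\emptyset$ for every $v\in T$ with $\alpha_T(v)=1$. For inhibitors in $T$, condition (a) gives them the value $-1$ directly. For inhibitors outside $T$, one must argue that the construction has already assigned them $-1$, which may require expanding the closure in the definition of $\alpha^*$ to mark additional vertices along the appropriate chains, and then reusing conditions (a) and (b) to check that no such forced $-1$ ever collides with a $T$-vertex assigned $+1$ by the phenotype. Carefully interleaving these two propagation rules, and showing they terminate without contradiction, is where the delicate reasoning concentrates.
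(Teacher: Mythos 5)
Your necessity argument is fine and is essentially the paper's: the paper establishes $\Reg_-(\alpha,v)=\emptyset$ for active targets and $\Reg_+(\alpha,v)=\emptyset$ for inactive targets in every attractor state and then ``iterates through all activation predecessors''; your induction on the length of the activation path is just that iteration made explicit, together with the (correct) observation that in the deterministic setting every attractor state is the successor of an attractor state.

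The sufficiency direction, however, has a genuine gap, and it sits exactly where you park it in your last paragraph. Your candidate $\alpha^*$ is not a fixed point: if an active target $v$ has an inhibition-predecessor $x\notin T$ that is not an activation-ancestor of any inactive target, your construction gives $\alpha^*(x)=0$, so $0\in\Reg_-(\alpha^*,v)\neq\emptyset$ and the update rule sends $v$ to $0$ or $-1$, destroying the phenotype; likewise a vertex defaulted to $0$ that receives an activation edge from an active target is updated to $1$, so the claim that Lemma~\ref{l_nopred} and Theorem~\ref{t_SRG} keep the $0$-part fixed also fails (this second issue could be sidestepped by doing what the paper does: run the dynamics from the constructed state and take the eventual cycle, since only the $T$-coordinates need to stay constant — a singleton attractor is more than is needed). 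The real content of the sufficiency direction is the repair you only gesture at: one must additionally mark every inhibition-predecessor of every active target with $-1$, close the $-1$-marked set under activation-predecessors, and prove that this propagation never reaches a vertex of $T$ carrying $+1$. That consistency claim is precisely what the paper's construction is built around (its two interleaved marking rules), and it is not routine bookkeeping from \eqref{target1} and \eqref{target2} as literally stated: what is needed is that no active target is an activation-ancestor of an inhibition-predecessor of an active target, which is the extra clause surfacing in the ``reformulation'' theorem immediately following this one, not a consequence you can read off from condition \eqref{target1} (which only constrains inhibition-predecessors lying in $T$). So as written your proposal proves necessity but leaves the sufficiency construction both incorrect in its stated form and unproved in its corrected form.
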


\begin{proof}
Assume first that there is an $\alpha_T$-attractor ${\cal A}_{\alpha_T}$. This means that for all $x\in{\cal A}_{\alpha_T}$ and for all $t\in T$, $x_t=\alpha_T(t)$.
Consider $v\in T$. If $\alpha_T(v)=1$, then for all $x\in {\cal A}_{\alpha_T}$, $\Reg_-(x,v)=\emptyset$, i.e., its inhibition predecessors $v\in T$ have $\alpha_T(v)=-1$. If $\alpha_T(v)=-1$, then for all $x\in{\cal A}_{\alpha_T}$, $\Reg_+(x,v)=\emptyset$, i.e., its activation predecessors $v$ have $\alpha_T(v)=-1$. This argument can be iterated throughout all activation predecessors, yielding the second claim.

Consider now a phenotype $\alpha_T$ satisfying properties \eqref{target1} and \eqref{target2}. We can construct an $\alpha_T$-attractor ${\cal A}_{\alpha_T}$ in the following way. We visit the graph going against the edges, starting from the target set $T$, and marking them with $-1$ or $1$ as we visit them. Let $S$ be the set of nodes still to be explored. We start the exploration with $S=T$, whose marking is already set through $\alpha_T$. Take all vertices in $S$ marked with $1$, consider all their inhibition predecessors, mark them with $-1$ and add them to $S$. Because of property \eqref{target1}, none of them was marked with $1$ and so, this marking will not contradict any previously set marking. Take all vertices in $S$ marked with $-1$, consider all their activation predecessors and mark them with $-1$. Because of property \eqref{target2}, none of them was marked with $1$ and so, this marking will not contradict any previously set marking. Iterate through this step until set $S$ stops growing. To construct an $\alpha_T$-attractor, consider a state $x_1\in\{-1,0,1\}^{|V|}$ defined on $S$ through the markings set above, and taking an arbitrary choice from $\{-1,0,1\}$ for the vertices in $V\setminus S$. We then consider the transitions $x_1\rightarrow x_2\rightarrow x_3\rightarrow\cdots$. Obviously, there will eventually be a repetition of states $x_i=x_j$, $i<j$ in this sequence. The $\alpha_T$-attractor is $\{x_i, x_{i+1}, \dots, x_{j-1}\}$.
\end{proof}

The following is a reformulation of Theorem \ref{t_attractor}.
A vertex $t$ is an \emph{active target} if $t\in T$ and $\alpha_T(t)=1$.
Similarly, A vertex $t$ is an \emph{inactive target} if $t\in T$ and $\alpha_T(t)=-1$.
An \emph{activation path} is a path in a strong regulatory graph made of activation edges only.

\begin{theorem}
Let $T\subseteq V$ be a target set. For phenotype $\alpha_T:T\rightarrow\{-1,1\}$ there is an $\alpha_T$-attractor if and only if
\begin{enumerate}[(a)]
\itemsep=0.9pt
    \item there is no activation path from an active target to an inactive target and
    \item there is no activation path plus a final inhibition edge between two active targets.
\end{enumerate}
\end{theorem}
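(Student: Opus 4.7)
My plan is to prove both directions directly. For the forward direction I will propagate forced markings backwards through activation edges, exploiting the determinism of the synchronous dynamics. Concretely: since the update rule is deterministic on a finite state space, any attractor is a single cycle; fix $\alpha$ in such a cycle and let $\beta$ be its successor. For (a), if there were an activation path $u = x_0 \to x_1 \to \cdots \to x_k = v$ with $u$ an active target and $v$ an inactive target, then $\beta(v) = -1$ forces $\Reg_+(\alpha, v) = \emptyset$ by Definition \ref{d_SRG}, so $\alpha(x_{k-1}) = -1$, and iterating along the activation edges yields $\alpha(u) = -1$, a contradiction. For (b), if $u$ and $v$ are active targets joined by an activation path $u \to \cdots \to x_m$ followed by the inhibition edge $x_m \to v$, then $\beta(v) = 1$ forces $\Reg_-(\alpha, v) = \emptyset$, so $\alpha(x_m) = -1$; backwards propagation along the activation portion of the path again gives $\alpha(u) = -1$.

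For the converse, I will adapt the constructive marking from the proof of Theorem \ref{t_attractor}: start with $S = T$ marked by $\alpha_T$, then iteratively add to $S$ every inhibition-predecessor of a $1$-marked vertex (marked $-1$) and every activation-predecessor of a $-1$-marked vertex (marked $-1$), until $S$ stabilizes. The core step is to prove that the marking is consistent. Any $1$-marked vertex must be an active target, and unwinding the two propagation rules shows that a vertex $z$ is $-1$-marked exactly when there is a (possibly length-$0$) activation path from $z$ to some $y$ that is either an inactive target or an inhibition-predecessor of an active target. An active target that were also $-1$-marked would therefore yield either an activation path from an active to an inactive target (excluded by (a), since the length-$0$ case would make one vertex both active and inactive target, which is impossible), or an activation path followed by a final inhibition edge between two active targets (excluded by (b)).

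Given consistency, I would define $\alpha_1$ by extending the markings arbitrarily on $V \setminus S$ and follow the trajectory $\alpha_1 \to \alpha_2 \to \cdots$. A short induction using Definition \ref{d_SRG} shows that the marking is preserved on $S$ at each step: active targets remain active because all their inhibition-predecessors lie in $S$ marked $-1$, while inactive targets and non-target $-1$-marked vertices remain $-1$ because all their activation-predecessors lie in $S$ marked $-1$. Finiteness of the state space then forces the trajectory into a cycle, which constitutes an $\alpha_T$-phenotype attractor. The main technical difficulty is the consistency check; the clean characterization of $-1$-marked vertices as origins of activation paths ending at a $-1$-marking source is what makes the two propagation rules correspond one-for-one to conditions (a) and (b).
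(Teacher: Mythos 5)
Your proof is correct, and it is worth noting how it relates to the paper's treatment: the paper gives no standalone proof of this statement, presenting it only as a reformulation of Theorem \ref{t_attractor}, whose proof supplies the two ingredients you use (backward forcing of values for necessity, the $-1$/$1$ marking construction plus running the deterministic dynamics for sufficiency). What you do differently is to prove the path-based statement directly, and your explicit characterization of the $-1$-marked vertices --- those admitting a (possibly length-$0$) activation path to an inactive target or to an inhibition-predecessor of an active target --- is a genuine improvement rather than a cosmetic one. The conditions of Theorem \ref{t_attractor} are quantified only over regulators $u\in T$, so they do not literally rule out an activation path from an active target through vertices outside $T$ ending in an inhibition edge into another active target; your condition (b) and its use in the consistency check, as well as your necessity argument which propagates $\Reg_+(\alpha,v)=\emptyset$ and $\Reg_-(\alpha,v)=\emptyset$ through non-target intermediate vertices, are exactly what the construction in the paper actually needs. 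One small point of care in your forward direction: when you ``iterate along the activation edges,'' the forcing $\alpha(x_{k-1})=-1$ must be obtained at \emph{every} state of the attractor cycle (each state of the cycle is the successor of another state of the cycle), so that the argument can be applied again one edge further back; with that quantification made explicit, the induction is sound.
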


\section{Application to a regulatory cancer network}

We discuss in this section a strong regulatory graph model of some of the key elements of the RTK (receptor tyrosine kinase) signaling through the MAPK- (mitogen - activated protein kinase) and the PI3K/AKT- (lipid kinase phoshoinositide-3-kinase) pathways. Our goal is to demonstrate that the strong regulatory graphs are expressive enough to capture some interesting properties of a biological model. The MAPK signaling pathway communicates signals from outside the cell to the nucleus of the cell, it is involved in cell growth and proliferation, it is often mutated in cancer and a popular target of cancer treatments \cite{mapk_pathway, mapk_review, mapk_review2}. It interplays with the PI3K/AKT signaling pathway, a master regulator of the cell, whose activation contributes to the development of tumors and resistance to anticancer therapies \cite{pi3k_target, pi3k_target2}. We consider here a portion of the Boolean network model of \cite{GTZANUDO20181}. The model is illustrated in Figure \ref{fig-mapk-pi3k},
with the key regulators shown as vertices, with pointed arrows for the activation edges and blunt arrows for the inhibition edges.
The model includes FOXO$_3$, a protein known for its role in inducing cell death, often inhibited in tumors, and AKT, with a role in cell survival and often activated in tumors. Following \cite{GTZANUDO20181}, the activation of RTK depends on the presence of growth factors, not included in the model, and so its state is set to be constant $RTK=-1$. We consider the $(-1,1)$ state of (FOXO$_3$, AKT) as an indicator of uncontrolled proliferation, that of $(1,-1)$ an indicator of non-proliferation, and that of $(-1,-1)$ an indicator of  moderate proliferation. (Because of the inactivation edge from AKT to FOXO$_3$, an $(1,1)$-attractor is not possible.) These are of course over-simplifications of a much more complex interplay of interactions.
We show that the strong regulatory graph model has the same variety of outcomes as the Boolean network of \cite{zanudo2011boolean}, while adopting the strong update rule proposed in Definition~\ref{d_SRG}.

\begin{figure}[!ht]
\vspace*{3mm}
\begin{center}
\includegraphics[width=5.56cm]{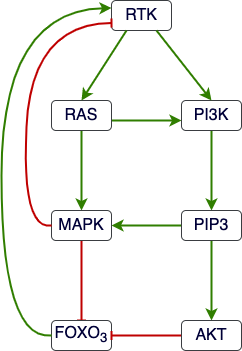}
\end{center}\vspace*{-4mm}
\caption{The MAPK-PI3K/AKT signaling pathways.}\label{fig-mapk-pi3k}
\end{figure}

In our discussion of the dynamics of the model, we write the variables in the order (RTK, RAS, PI3K, MAPK, PIP3, FOXO$_3$, AKT).
The model has several phenotype attractors associated with all three possibilities of the (FOXO$_3$, AKT) phenotype described above. Indeed, here are examples of $(-1,-1)$-, $(1,-1)$-, and $(-1,1)$-attractors:
\begin{itemize}
\itemsep=0.65pt
    \item for $x_1=(-1,-1,-1, 1,-1, {\bf -1}, {\bf -1})$, $x_1\rightarrow x_1$;
    \item for $x_2=(-1,-1,-1,-1,-1, {\bf 1}, {\bf -1})$, $x_2\rightarrow x_2$;
    \item for $x_3=(-1,1,1,1,1,{\bf -1}, {\bf 1})$, $x_3\rightarrow x_3$.
\end{itemize}

\begin{figure}[!b]
\begin{center}
\adjustbox{
clip=true, trim=0cm 0cm 3cm 0cm 
}
{
\includegraphics[width=1.1\textwidth]{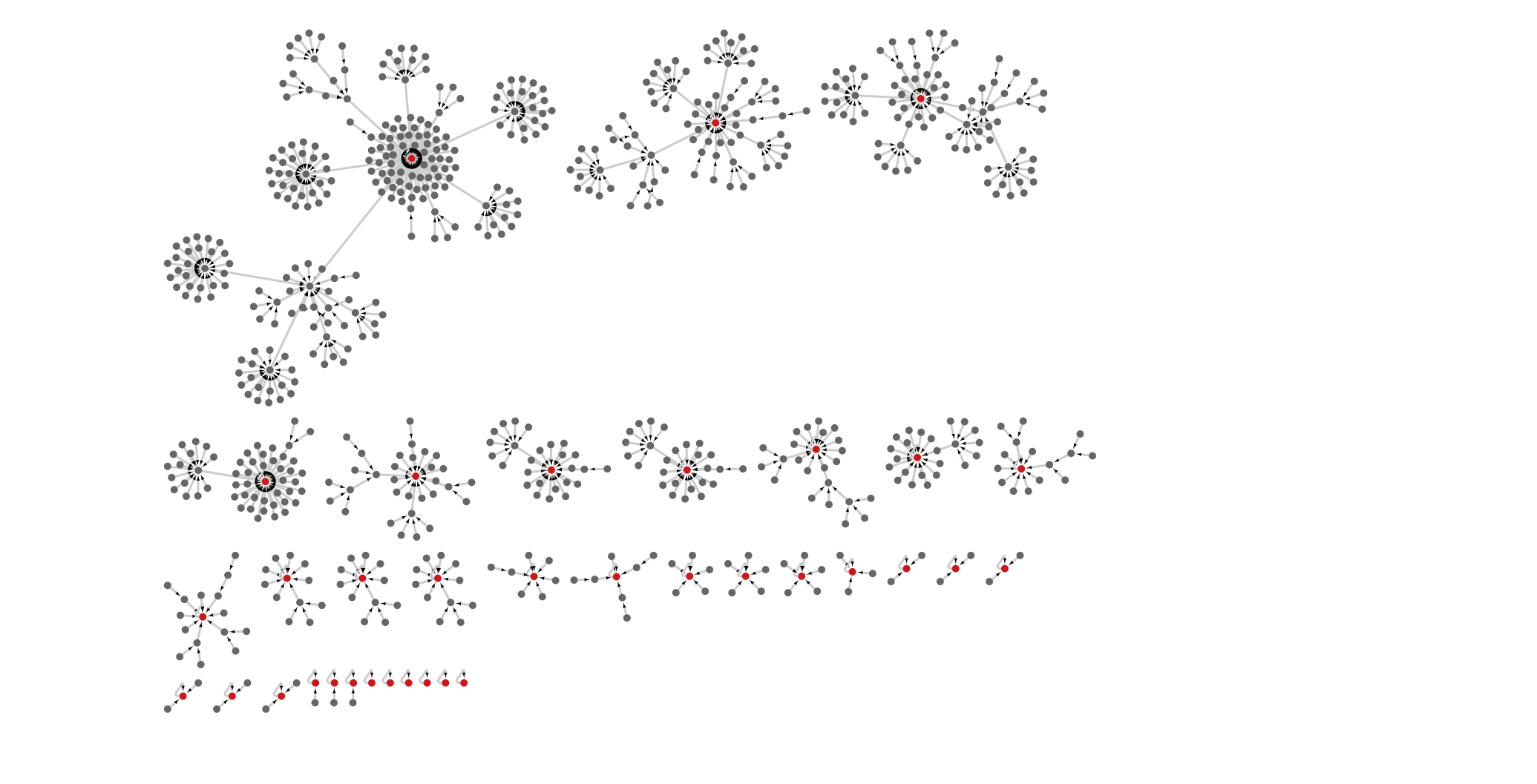}
}
\end{center}\vspace*{-12mm}
\caption{The basins of attractors of the state transition graph of the MAPK-PI3K/AKT model.}\label{fig-mapk-pi3k-attractors}
\end{figure}

A mutation on PI3K, making it constant active, gives proliferation as the only stable set, consistent with the biological observations in \cite{pi3k_target} and with the behavior of the Boolean network model of \cite{zanudo2011boolean}: for $x_4=(-1,-1, {\bf 1},1,1, {\bf -1}, {\bf 1})$ we have $x_4\rightarrow x_4$.
Also consistent with the Boolean network model of \cite{zanudo2011boolean} is the observation that setting PI3K to inactive (corresponding to the idea of targeting it with drug inhibitors), allows all proliferation outcomes to be possible:
\begin{itemize}
\itemsep=0.95pt
    \item $(-1,-1,{\bf -1},-1,1, {\bf 1}, {\bf -1})\rightarrow (-1,-1,{\bf -1},1,1, {\bf 1}, {\bf 1})\rightarrow (-1,-1,{\bf -1},1,1,{\bf -1}, {\bf 1})\rightarrow(-1,$ $-1, {\bf -1},1,1,{\bf -1}, {\bf 1})$, with the final vertex being a singleton attractor;
    \item $(-1,-1,{\bf -1},-1,-1, {\bf 1}, {\bf -1})$ is a singleton attractor;
    \item $(-1,-1, {\bf -1},1,-1, {\bf 1}, {\bf -1} )\rightarrow (-1,-1,{\bf -1},1,-1,{\bf -1}, {\bf -1})$, with the final vertex being a singleton attractor.
\end{itemize}


The state transition graph of the model (with $RTK=-1$) has 729 nodes and 35 attractors. The graph is in Figure \ref{fig-mapk-pi3k-attractors}, with the basins of attractors made visible (but not the labels).

\section{Discussion}
\label{sec:disc}

Our main motivation in this article was to offer a solution to the gap in the current modeling of large network models. Current network models span between structural modeling approaches (well-supported by interaction data, scalable, but not dynamical) and detailed logical models (dynamical, but problematic to scale up because the data is typically not detailed enough to specify the update functions of each variable).
We introduced the concept of strong regulatory graph, a form of logical network model where the active/inactive status of a vertex is set only if its regulators agree in their influences; otherwise, the status of the vertex is set to ambiguous. Ambiguity here means that the node may potentially be active or inactive, depending on the state configuration of its predecessors, a level of detail that is left out of logical models. The ambiguity may have a cascading influence over the successors of the node, but it may also cancel out through the clarifications brought by other active vertices in the network. We discussed an update rule that defines the interplay between active, inactive, and ambiguous regulators. We also discussed the phenotype attractor problem and showed that it is easy to decide whether an attractor of a given phenotype exists in a strong regulatory graph.

\medskip
It is straightforward to see that for any strong regulatory graph there is a Boolean network that can simulate it. The state transition rule is indeed based on simple logical tests, and so easily implementable through Boolean functions. Also, the range of values $\{-1,0,1\}$ can easily be implemented using two Boolean variables for each vertex of the strong regulatory graph. This observation connects the strong regulatory graphs to the rich literature on the control of Boolean networks.

\medskip
We discussed only the synchronous version of the strong regulatory graphs, where all vertices are updated simultaneously. Conceptually, the asynchronous version is similar, and it can be discussed through both deterministic, as well as non-deterministic state-transition systems, whose properties should be interesting to characterize in a further study. Also left for another study is the study of controllability of strong regulatory graphs. Its targeted version can be easily defined in terms of reaching a phenotype attractor from any state with some minimal interventions in the state transition system.

\subsection*{Acknowledgments}
We are very grateful to the referees for their very detailed and very helpful suggestions, that supported us in improving the readability of the paper. This work received support from MRID, project no 842027778, contract no 760096 and the Core Program within the National Research, Development and Innovation Plan 2022-2027, carried out with the support of MRID, project no. 23020101(SIA-PRO), contract no 7N/2022.

\end{document}